\newtheorem{theorem}{Theorem}[section]
\newtheorem{lemma}[theorem]{Lemma}
\title{The equivalence of the Delta method and the cluster-robust variance estimator for the analysis of clustered randomized experiments}
\author{
    Alex Deng\\
    Airbnb Inc.\\
    alex.deng@airbnb.com
  \and
    Jiannan Lu\\
    Microsoft Corporation\\
    jiannl@microsoft.com
    \and
    Wen Qin\\
    Microsoft Corporation\\
    weqin@microsoft.com
}
\begin{document}
\maketitle

\begin{abstract}
It often happens that the same problem presents itself to different communities and the solutions proposed or adopted by those communities are different. We take the case of the variance estimation of the population average treatment effect in cluster-randomized experiments. The econometrics literature promotes the cluster-robust variance estimator \citep{athey2017econometrics}, which can be dated back to the study of linear regression with clustered residuals \citep{liang1986longitudinal}. The A/B testing or online experimentation literature promotes the delta method \citep{kohavi2010online,Deng2017,Deng2018}, which tackles the variance estimation of the ATE estimator directly using large sample theory. The two methods are seemly different as the former begins with a regression setting at the individual unit level and the latter is semi-parametric with only i.i.d. assumptions on the clusters. Both methods are widely used in practice. It begs the question for their connection and comparison. In this paper we prove they are equivalent and in the canonical implementation they should give exactly the same result. 
\end{abstract}

\keywords{A/B Testing \and Cluster Randomized Experiments \and Variance Estimation \and Sandwich Estimator \and }

\section{Introduction}
Consider the problem of analyzing the population average treatment effect in a cluster-randomized experiment. Let $(Y_{i}, W_i), i=1,\dots,N$ be the observation and treatment assignment pairs for $N$ subjects. An unbiased estimator for the population average treatment effect is simply the difference of the averages of the two variants:
$$
\Delta = \frac{\sum_{W_i = 1} Y_i}{N_T}-\frac{\sum_{W_i = 0} Y_i}{N_C}\ ,
$$
where $N_T = \sum_i W_i$ and $N_C = \sum_i (1-W_i)$ are the treatment unit count and control unit count. In particular, we are interested in estimating its variance. Since treatment and control groups are independent, 
\begin{equation}\label{eq:var}
\mathrm{Var}(\Delta) = \mathrm{Var}\left(\frac{\sum_{W_i = 1} Y_i}{N_T} \right) + \mathrm{Var}\left(\frac{\sum_{W_i = 0} Y_i}{N_C} \right)\ .
\end{equation}

What makes the cluster-randomized experiments special is that the treatment assignment $W_i$ are not independent to each other because the randomization is operated on the cluster level, not individual unit level \citep{klar2001current}. Let $G$ be the number of clusters, and let $G_{i g} \in\{0,1\}, i=1, \ldots, N, g=1, \ldots, G$ denote the binary indicator that unit $i$ belongs to cluster $g$. In this paper we assume that each unit only belongs to one cluster, i.e., $\sum_{g=1}^G G_{ig}=1$ for all $i=1, \ldots, N.$ Furthermore, we let $N_{g}=\sum_{i=1}^{N} G_{i g}$ denote the number of units in cluster $g$. For any given cluster $g$, $W_i$ are either all $1$ or all $0$ for all cluster members $\{i: G_{i g}=1\}$. Because of this, the variances in Equation~\eqref{eq:var} cannot be estimated using standard sample variance estimator.

It often happens that the same problem presents itself to different communities and the solutions adopted by different communities are different. In this case, the econometrics literature promotes the cluster-robust variance estimator \citep{athey2017econometrics}, which can be dated back to the study of linear regression with clustered residuals \citep{liang1986longitudinal}. This cluster-robust variance estimator has the form of $A^{-1}\Sigma A^{-1}$ where $A$ and $\Sigma$ are both matrices, hence the other name sandwich estimator. 

Different sandwich estimators have been used in econometrics for decades, dated back to the famous Ericker-Huber-White estimator to tackle heteroskedasticity in linear regressions \citep{eicker1967limit,huber1967behavior,white1980heteroskedasticity,white1982maximum,kauermann2001note}.

\begin{theorem}[Cluster-Robust Variance Estimator]\label{thm:sandwich}
Let 
$$
\hat{\alpha} = \frac{\sum_{W_i = 0} Y_i}{N_C},
\quad
\hat{\tau} = \Delta = \frac{\sum_{W_i = 1} Y_i}{N_T}-\frac{\sum_{W_i = 0} Y_i}{N_C}.
$$
Let $\hat{\varepsilon}_{i}=Y_{i}^{\mathrm{obs}}-\hat{\alpha}-\hat{\tau} \cdot W_{i}$ be the residual.
The asymptotic covariance matrix of $\left(\hat{\alpha}, \hat{\tau} \right)$ can be estimated by:
\begin{equation}\label{eq:sandwich}
  \left(\sum_{i=1}^{N} \left( \begin{array}{cc}{1} & {W_{i}} \\ {W_{i}} & {W_{i}}\end{array}\right)\right)^{-1}\left(\sum_{g=1}^{G} \left[ \sum_{i: G_{i g}=1} \left( \begin{array}{c}{\hat{\varepsilon}_{i}} \\ {W_{i} \cdot \hat{\varepsilon}_{i}}\end{array}\right) \sum_{i: G_{ig}=1} \left( \begin{array}{c}{\hat{\varepsilon}_{i}} \\ {W_{i}^{i} \cdot \hat{\varepsilon}_{i}}\end{array}\right)^{\prime} \right ] \right)\left(\sum_{i=1}^{N} \left( \begin{array}{cc}{1} & {W_{i}} \\ {W_{i}} & {W_{i}}\end{array}\right)\right)^{-1}  
\end{equation}
In particular, the variance of $\Delta$ (same as $\hat{\tau}$ here) can be estimated using the $(2,2)$-th entry of \eqref{eq:sandwich}.
\end{theorem}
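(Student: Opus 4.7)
The plan is to recognize the pair $(\hat\alpha, \hat\tau)$ as the ordinary least squares estimator in the working linear regression $Y_i = \alpha + \tau W_i + \varepsilon_i$ and then specialize the generic cluster-robust sandwich formula of \citet{liang1986longitudinal} to the binary-covariate design vector $X_i = (1, W_i)'$. First I would verify, via the normal equations and the identity $W_i^2 = W_i$, that the OLS solution is precisely $\hat\tau = \bar Y_T - \bar Y_C$ and $\hat\alpha = \bar Y_C$. The same identity also explains why the bread matrix $\sum_i X_i X_i'$ has $W_i$ (rather than $W_i^2$) in its $(2,2)$-entry, matching the outer factors of \eqref{eq:sandwich}.

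Next I would carry out the standard asymptotic expansion
\[
\hat\beta - \beta \;=\; \Bigl(\sum_{i=1}^N X_i X_i'\Bigr)^{-1} \sum_{g=1}^G U_g, \qquad U_g \;:=\; \sum_{i:G_{ig}=1} X_i \varepsilon_i,
\]
where $\beta = (\alpha, \tau)'$. Because randomization occurs at the cluster level and clusters are assumed i.i.d., the score contributions $U_g$ are independent mean-zero vectors. A multivariate central limit theorem applied to $G^{-1/2}\sum_g U_g$, combined with a law of large numbers for $N^{-1}\sum_i X_i X_i'$, yields the asymptotic sandwich form $A^{-1}\Omega A^{-1}$ with $A \propto E[X X']$ and $\Omega \propto E[U_g U_g']$. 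The estimator in \eqref{eq:sandwich} is then the natural sample analog: the outer factors estimate $A$ via $\sum_i X_i X_i'$, while the middle factor estimates $\Omega$ by replacing $U_g$ with $\hat U_g = \sum_{i:G_{ig}=1} X_i \hat\varepsilon_i$, whose coordinates $(\hat\varepsilon_i, W_i \hat\varepsilon_i)'$ reproduce the inner column vector in the theorem.

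The main obstacle is the plug-in step: showing that $G^{-1}\sum_g \hat U_g \hat U_g'$ converges in probability to $E[U_g U_g']$ once residuals replace true errors. I would handle this by decomposing $\hat\varepsilon_i - \varepsilon_i = -X_i'(\hat\beta - \beta)$, expanding the bilinear form $\hat U_g \hat U_g'$ around $U_g U_g'$, and controlling the cross and quadratic remainder terms using $\hat\beta - \beta = o_p(1)$ together with a uniform second-moment bound on the within-cluster sums $\sum_{i:G_{ig}=1} X_i \varepsilon_i$ (and on $N_g$). Slutsky's theorem then combines the consistent bread and meat into a consistent estimator of $A^{-1}\Omega A^{-1}$, and reading off its $(2,2)$-entry gives the asymptotic variance of $\hat\tau = \Delta$, as claimed.
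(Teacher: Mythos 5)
The paper does not actually prove Theorem~\ref{thm:sandwich}; it imports the estimator wholesale from the econometrics literature (\citet{athey2017econometrics}, tracing back to \citet{liang1986longitudinal}) and devotes its own proofs to Theorem~\ref{thm:dm} and the equivalence result Theorem~\ref{thm:main}. So there is no in-paper argument to compare yours against. Judged on its own, your outline is the standard and correct route: identifying $(\hat\alpha,\hat\tau)$ as OLS in $Y_i=\alpha+\tau W_i+\varepsilon_i$ (with $W_i^2=W_i$ explaining the bread matrix), using the exact identity $\hat\beta-\beta=(\sum_i X_iX_i')^{-1}\sum_g U_g$, invoking independence of the cluster scores $U_g$, and estimating the meat by $\sum_g \hat U_g\hat U_g'$ does reproduce formula~\eqref{eq:sandwich} with the correct (un-normalized) scaling. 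Two caveats worth making explicit if you flesh this out: the asymptotics must be taken in the number of clusters $G\to\infty$ with moment conditions on the cluster sizes $N_g$ and on the within-cluster error sums, and the hardest step --- showing $\sum_g \hat U_g\hat U_g'$ tracks $\sum_g \mathrm{E}[U_gU_g']$ after residuals replace errors --- is only sketched via the decomposition $\hat\varepsilon_i-\varepsilon_i=-X_i'(\hat\beta-\beta)$; that decomposition is the right tool, but the quadratic remainder involves terms like $(\hat\beta-\beta)'(\sum_{i:G_{ig}=1}X_iX_i')$ whose control requires the cluster-size moment bound you mention, so state it as a hypothesis rather than leaving it implicit.
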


On the other camp, the A/B testing or online experimentation literature promotes the delta method \citep{kohavi2010online,Deng2017,Deng2018}, which tackles the variance estimation of the ATE estimator directly using large sample theory \citep{van2000asymptotic, dasgupta2008asymptotic}. The delta method is often presented as a way to extend asymptotic normality of a random variable (or vector of random variables) to a continuous function of the said variable(s). In the special case of clustered-randomized experiment, it has a simpler form by the following result.

\begin{lemma}\label{lem:dm}
For i.i.d. random variables $(S_k, D_k),k=1,\dots,n$, 
$$
\sqrt{n}
\left[
\frac{\sum S_k}{\sum D_k}\ 
- 
\frac{\sum
\left\{ 
S_k - \frac{\mathrm{E}(S)}{\mathrm{E}(D)} D_k 
\right\}}{n\mathrm{E}(D)} 
\right] 
\xrightarrow{P} 0\ .
$$
In other words, the asymptotic variance of $\frac{\sum S_k}{\sum D_k}$ is the same as that of 
$
\frac{\sum\left\{ S_k - \frac{\mathrm{E}(S)}{\mathrm{E}(D)} D_k \right\}}{n\mathrm{E}(D)}
$.
\end{lemma}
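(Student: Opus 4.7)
The plan is to recognize the claimed identity as the first-order Taylor expansion of the ratio map $f(x,y)=x/y$ at the population means $(\mu_S,\mu_D):=(\mathrm{E}(S),\mathrm{E}(D))$, and to control the remainder using the CLT/LLN for the sample means $\bar S:=n^{-1}\sum S_k$ and $\bar D:=n^{-1}\sum D_k$. Throughout I would assume $\mu_D\ne 0$ (implicit so the ratio is well defined) and finite second moments for $S$ and $D$, which together give $\bar S-\mu_S,\bar D-\mu_D=O_p(n^{-1/2})$ and $\bar D\xrightarrow{P}\mu_D$.

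First I would put both statistics over a common algebraic form. Directly,
$$
\frac{\bar S}{\bar D}-\frac{\mu_S}{\mu_D}\;=\;\frac{\bar S-(\mu_S/\mu_D)\bar D}{\bar D},\qquad
\frac{\sum_k\bigl\{S_k-(\mu_S/\mu_D)D_k\bigr\}}{n\mu_D}\;=\;\frac{\bar S-(\mu_S/\mu_D)\bar D}{\mu_D}.
$$
Subtracting, after centering both sides by $\mu_S/\mu_D$ (which the lemma's accompanying variance remark implicitly does, since otherwise the constant $\mu_S/\mu_D$ would dominate the $\sqrt n$ scaling), leaves the single cross term
$$
\bigl(\bar S-(\mu_S/\mu_D)\bar D\bigr)\left(\frac{1}{\bar D}-\frac{1}{\mu_D}\right)\;=\;\frac{\bigl(\bar S-(\mu_S/\mu_D)\bar D\bigr)(\mu_D-\bar D)}{\bar D\,\mu_D}.
$$
The first factor equals $(\bar S-\mu_S)-(\mu_S/\mu_D)(\bar D-\mu_D)$ and is $O_p(n^{-1/2})$ by the CLT; the second factor $\mu_D-\bar D$ is also $O_p(n^{-1/2})$; and $\bar D\,\mu_D\xrightarrow{P}\mu_D^2\ne 0$ by the LLN. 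The product is therefore $O_p(1/n)$, so multiplying by $\sqrt n$ yields $o_p(1)$, which is exactly the desired convergence in probability.

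The ``asymptotic variance'' conclusion then follows immediately, since the surrogate statistic is a sample average of i.i.d. summands with variance $\mathrm{Var}(S-(\mu_S/\mu_D)D)/(n\mu_D^2)$, and the $o_p(n^{-1/2})$ discrepancy above does not contribute at the $1/n$ variance scale. The only real obstacle is bookkeeping: one must subtract the correct centering constant $\mu_S/\mu_D$ so that each side is genuinely $O_p(n^{-1/2})$ before comparing them. No higher-order Taylor terms are ever needed, because the ``linearization'' is an exact algebraic rearrangement up to the quadratic cross-term that we just bounded.
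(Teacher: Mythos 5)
Your proof is correct, and it is worth noting that the paper itself does not prove Lemma~\ref{lem:dm} at all (it defers to the appendix of \citet{Deng2017}); the closest the paper comes is the first-order Taylor expansion of $\bar R/\bar N$ in the proof of the centered-form theorem. Your route replaces that Taylor expansion with the exact algebraic identity $\bar S/\bar D-\mu_S/\mu_D=\bigl(\bar S-(\mu_S/\mu_D)\bar D\bigr)/\bar D$, which reduces the whole claim to bounding a single product of two $O_p(n^{-1/2})$ terms over a denominator bounded away from zero; this is more elementary and self-contained, since no remainder term from Taylor's theorem ever has to be controlled, at the cost of being special to the ratio map rather than a generic smooth $f$. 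You also correctly caught that the lemma as literally printed is off by the constant $\mathrm{E}(S)/\mathrm{E}(D)$: the left term converges to $\mu_S/\mu_D$ while the linear surrogate converges to $0$, so the displayed difference cannot be $o_p(n^{-1/2})$ unless $\mu_S=0$; the intended statement adds $\mu_S/\mu_D$ to the surrogate (equivalently, centers the left term), which is immaterial for the asymptotic-variance conclusion but necessary for the convergence claim to hold. Your explicit statement of the standing assumptions ($\mu_D\neq 0$, finite second moments) is also a useful addition, as the lemma leaves them implicit.
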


\begin{theorem}[Delta Method for Clustered-Randomized Experiments]\label{thm:dm}
Let $\hat{\alpha} = \frac{\sum_{W_i = 0} Y_i}{N_C}$, $\hat{\tau} = \Delta = \frac{\sum_{W_i = 1} Y_i}{N_T}-\frac{\sum_{W_i = 0} Y_i}{N_C}$. Let $\hat{\varepsilon}_{i}=Y_{i}^{\mathrm{obs}}-\hat{\alpha}-\hat{\tau} \cdot W_{i}$ be the residual.
Let $S_{gT} = \sum_{i: G_{ig}=1,W_i=1} \hat{\varepsilon}_{i}$ and $S_{gC} = \sum_{i: G_{ig}=1,W_i=0} \hat{\varepsilon}_{i}$.

$\mathrm{Var}\left (\frac{\sum_{W_i = 1} Y_i}{N_T}\right )$ and $\mathrm{Var}\left (\frac{\sum_{W_i = 0} Y_i}{N_C}\right )$ can be estimated by $\frac{\sum_g S_{gT}^2}{N_T^2}$ and $\frac{\sum_g S_{gC}^2}{N_C^2}$ respectively. $\mathrm{Var}(\Delta)$ can be estimated by 
\begin{equation}
\label{eqn:dm}
\frac{\sum_g S_{gT}^2}{N_T^2} + \frac{\sum_g S_{gC}^2}{N_C^2}
\end{equation}
\end{theorem}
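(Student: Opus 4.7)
The plan is to realize the treatment mean $\sum_{W_i=1} Y_i/N_T$ as a ratio of sums of i.i.d.\ cluster-level quantities and then apply Lemma~\ref{lem:dm} directly; the control mean is handled symmetrically, and independence of the two arms together with Equation~\eqref{eq:var} yields the estimator for $\mathrm{Var}(\Delta)$.

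First I would set, for each treatment cluster $g$, $Y_g^{T}=\sum_{i:G_{ig}=1,W_i=1}Y_i$ and $N_g=\sum_{i:G_{ig}=1}1$, and observe that since randomization is at the cluster level and each unit belongs to exactly one cluster, the pairs $(Y_g^{T},N_g)$ across treatment clusters are i.i.d. Writing
\[
\frac{\sum_{W_i=1}Y_i}{N_T}=\frac{\sum_{g\in T}Y_g^{T}}{\sum_{g\in T}N_g},
\]
I would apply Lemma~\ref{lem:dm} with $S_k=Y_g^{T}$ and $D_k=N_g$. The lemma gives that the asymptotic variance of this ratio equals that of the linearization
\[
\frac{1}{n_T \mathrm{E}(N)}\sum_{g\in T}\bigl\{Y_g^{T}-\mu_T N_g\bigr\},\qquad \mu_T=\frac{\mathrm{E}(Y^{T})}{\mathrm{E}(N)},
\]
where $n_T$ is the number of treatment clusters. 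Since the cluster summands are i.i.d., this asymptotic variance equals $\mathrm{Var}(Y^{T}-\mu_T N)/(n_T \mathrm{E}(N)^2)$, which can be consistently estimated by plugging in the sample counterparts.

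Second, I would verify that the plug-in estimator collapses to the stated formula. Estimate $n_T\mathrm{E}(N)$ by $\sum_{g\in T}N_g=N_T$, estimate $\mu_T$ by the sample ratio, which is exactly $\hat{\alpha}+\hat{\tau}$, and estimate the variance of the numerator summand by the empirical second moment (ignoring an asymptotically negligible centering). This produces
\[
\frac{1}{N_T^{2}}\sum_{g\in T}\bigl(Y_g^{T}-(\hat{\alpha}+\hat{\tau})N_g\bigr)^{2}.
\]
The crucial algebraic identity is that, inside each treatment cluster,
\[
Y_g^{T}-(\hat{\alpha}+\hat{\tau})N_g=\sum_{i:G_{ig}=1,W_i=1}\bigl(Y_i-\hat{\alpha}-\hat{\tau}W_i\bigr)=S_{gT},
\]
so the estimator becomes $\sum_g S_{gT}^{2}/N_T^{2}$ (the sum may be extended over all clusters since $S_{gT}=0$ on control clusters).

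Third, I would repeat the whole argument for the control arm with $S_k=Y_g^{C}=\sum_{i:G_{ig}=1,W_i=0}Y_i$ and $\mu_C=\hat{\alpha}$, obtaining the estimator $\sum_g S_{gC}^{2}/N_C^{2}$ for $\mathrm{Var}\bigl(\sum_{W_i=0}Y_i/N_C\bigr)$. Finally, cluster-level randomization makes the treatment and control arms independent, so Equation~\eqref{eq:var} gives $\mathrm{Var}(\Delta)=\sum_g S_{gT}^{2}/N_T^{2}+\sum_g S_{gC}^{2}/N_C^{2}$.

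The main obstacle I anticipate is not the asymptotic step, which is delivered essentially for free by Lemma~\ref{lem:dm}, but the bookkeeping that turns the plug-in linearization into exactly the formula~\eqref{eqn:dm}: one has to notice that the cluster-level residual $Y_g^{T}-(\hat{\alpha}+\hat{\tau})N_g$ coincides with the sum of the individual OLS-style residuals $\hat{\varepsilon}_i$ over treated units in cluster $g$, which is what makes the delta-method expression match the cluster-robust form. A secondary care point is justifying the replacement of $\hat{\mu}_T$ for $\mu_T$ inside the squared residual as asymptotically negligible, which is standard via a uniform consistency argument on cluster sample moments.
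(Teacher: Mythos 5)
Your proposal is correct and follows essentially the same route as the paper's proof: write the treatment mean as a ratio of i.i.d.\ cluster-level sums, linearize via Lemma~\ref{lem:dm}, and use the plug-in identity $R_g - (\hat\alpha+\hat\tau)N_g = \sum_{i:G_{ig}=1}\hat\varepsilon_i = S_{gT}$ to collapse the estimator to $\sum_g S_{gT}^2/N_T^2$, then handle control symmetrically and add by independence. The only cosmetic remark is that the centering you call ``asymptotically negligible'' is in fact exactly zero here, since the residuals $\hat\varepsilon_i$ sum to zero within each arm.
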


It is straightforward to show that \eqref{eqn:dm}, after expansion, is the same as the delta method variance estimator using first order Taylor expansion. Theorem~\ref{thm:dm} merely offers a different way to derive the same estimator using Lemma~\ref{lem:dm}. We use this form because it makes the comparison to the Cluster-Robust variance estimator easier as both depends on the residual $\hat{\epsilon_i}$. 

The main result of this paper is the following equivalence result. 
\begin{theorem}[Equivalence of the two methods]\label{thm:main}
The $(2,2)$-th entry of \eqref{eq:sandwich}'s equals \eqref{eqn:dm}. 
\end{theorem}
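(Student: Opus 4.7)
The plan is a direct algebraic reduction: compute the $(2,2)$ entry of the sandwich $A^{-1} B A^{-1}$ in closed form and check it matches \eqref{eqn:dm} term for term. The proof requires no probability theory; it is essentially a two-paragraph calculation built around the two special features of this setting, namely that $W_i\in\{0,1\}$ and that $W_i$ is constant within each cluster.

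For the bread, I observe that $W_i\in\{0,1\}$ forces $A=\sum_i\begin{pmatrix}1 & W_i\\ W_i & W_i\end{pmatrix}$ to collapse to $\begin{pmatrix}N & N_T\\ N_T & N_T\end{pmatrix}$, whose determinant is $N_T N_C$. Only the second row of $A^{-1}$ is needed, since the $(2,2)$ entry of $A^{-1} B A^{-1}$ is the quadratic form $r B r^\top$ with $r=(a,b)$ that row; a quick inversion gives $a=-1/N_C$ and $b=1/N_T+1/N_C$, so in particular the very convenient identity $a+b=1/N_T$ will drop out.

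For the meat, I would exploit the defining feature of cluster randomization: $W_i$ is constant across units in any given cluster. Hence the inner column $\sum_{i:G_{ig}=1}(\hat{\varepsilon}_i,\,W_i\hat{\varepsilon}_i)^\top$ equals $(S_{gT},S_{gT})^\top$ on a treatment cluster and $(S_{gC},0)^\top$ on a control cluster. Summing the associated rank-one outer products over $g$ and writing $T=\sum_g S_{gT}^2$, $C=\sum_g S_{gC}^2$ then gives $B=\begin{pmatrix}T+C & T\\ T & T\end{pmatrix}$ with no further work.

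Substituting into $r B r^\top=a^2 B_{11}+2ab B_{12}+b^2 B_{22}$ and regrouping by $T$ and $C$, the three $T$-terms assemble as $(a+b)^2 T=T/N_T^2$ while the residual $a^2 C$ becomes $C/N_C^2$; the sum is exactly \eqref{eqn:dm}. The only step that takes any thought is the cluster-level collapse of $B$ using the within-cluster constancy of $W_i$; from there the identity is one line of arithmetic, so I do not expect a real obstacle.
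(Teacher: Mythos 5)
Your proposal is correct and follows essentially the same route as the paper: invert the bread matrix to get the row $(-1/N_C,\,1/N_T+1/N_C)$, collapse the meat to $\sum_g\bigl(\begin{smallmatrix}S_{gT}^2+S_{gC}^2 & S_{gT}^2\\ S_{gT}^2 & S_{gT}^2\end{smallmatrix}\bigr)$ using the within-cluster constancy of $W_i$ (equivalently, $S_{gT}S_{gC}=0$), and evaluate the quadratic form. Your regrouping via $(a+b)^2 T + a^2 C$ with $a+b=1/N_T$ is a slightly tidier way to finish the arithmetic, but it is the same calculation.
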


\section{Proof of the main result}
We first prove Theorem~\ref{thm:main} using straightforward algebra. Then we prove Theorem~\ref{thm:dm} from Lemma~\ref{lem:dm}. The proof of Lemma~\ref{lem:dm} itself can be found in \citet[Appendix]{Deng2017}.

\begin{proof}[Proof of the main result Theorem~\ref{thm:main}]
\ \\
Starting from Equation~\eqref{eq:sandwich}
\begin{equation*}
  \left(\sum_{i=1}^{N} \left( \begin{array}{cc}{1} & {W_{i}} \\ {W_{i}} & {W_{i}}\end{array}\right)\right)^{-1}\left(\sum_{g=1}^{G} \left[ \sum_{i: G_{i g}=1} \left( \begin{array}{c}{\hat{\varepsilon}_{i}} \\ {W_{i} \cdot \hat{\varepsilon}_{i}}\end{array}\right) \sum_{i: G_{ig}=1} \left( \begin{array}{c}{\hat{\varepsilon}_{i}} \\ {W_{i}^{i} \cdot \hat{\varepsilon}_{i}}\end{array}\right)^{\prime} \right ] \right)\left(\sum_{i=1}^{N} \left( \begin{array}{cc}{1} & {W_{i}} \\ {W_{i}} & {W_{i}}\end{array}\right)\right)^{-1}  \ ,
\end{equation*} 
because $N_T = \sum W_i$ and $N = \sum_i 1$, 
$$
\sum_{i=1}^{N} \left( \begin{array}{cc}{1} & {W_{i}} \\ {W_{i}} & {W_{i}}\end{array}\right)  =  \left( \begin{array}{cc}{N_T+N_C} & {N_T} \\ {N_T} & {N_T}\end{array}\right)\,
$$
and 
$$
\left(\sum_{i=1}^{N} \left( \begin{array}{cc}{1} & {W_{i}} \\ {W_{i}} & {W_{i}}\end{array}\right) \right) ^{-1} = \left( \begin{array}{cc}{\frac{1}{N_C}} & {-\frac{1}{N_C}} \\ {-\frac{1}{N_C}} & {\frac{1}{N_T}+\frac{1}{N_C}}\end{array}\right) \ .
$$
For the middle term, we follow previous notations and let 
$$
S_{gT} = \sum_{i: G_{ig}=1,W_i=1} \hat{\varepsilon}_{i},
\quad
S_{gC} = \sum_{i: G_{ig}=1,W_i=0} \hat{\varepsilon}_{i}.
$$
For any cluster $g,$ all units $\{i: G_{ig}=1\}$ are either all assigned to treatment or control, which implies that either $S_{gT}=0$ or $S_{gC}=0.$ Consequently $S_{gT}S_{gC}=0$ for all $g=1, \ldots, G,$ and therefore
\begin{align*}
& \sum_{g=1}^{G} \left[ \sum_{i: G_{i g}=1} \left( \begin{array}{c}{\hat{\varepsilon}_{i}} \\ {W_{i} \cdot \hat{\varepsilon}_{i}}\end{array}\right) \sum_{i: G_{ig}=1} \left( \begin{array}{c}{\hat{\varepsilon}_{i}} \\ {W_{i}^{i} \cdot \hat{\varepsilon}_{i}}\end{array}\right)^{\prime} \right ] = \sum_{g=1}^{G} \left[ \left( \begin{array}{c} S_{gT}+S_{gC} \\ S_{gT}\end{array}\right) \cdot \left( \begin{array}{c} S_{gT}+S_{gC} \\ S_{gT}\end{array}\right)^{\prime} \right] \\
& = \sum_{g=1}^{G} \left( \begin{array}{cc}(S_{gT}+S_{gC})^2 & S_{gT}^2+S_{gT}S_{gC} \\ S_{gT}^2+S_{gT}S_{gC} & S_{gT}^2\end{array} \right) = \sum_{g=1}^{G} \left( \begin{array}{cc} S_{gT}^2+S_{gC}^2 & S_{gT}^2 \\ S_{gT}^2 & S_{gT}^2\end{array} \right) \ .
\end{align*}

The $(2,2)$-th entry of Equation~\eqref{eq:sandwich} is
\begin{align*}
\left( \begin{array}{cc} {-\frac{1}{N_C}} & {\frac{1}{N_T}+\frac{1}{N_C}}\end{array}\right)  \left( \begin{array}{cc}\sum_{g=1}^{G}( S_{gT}^2+S_{gC}^2) & \sum_{g=1}^{G} S_{gT}^2 \\ \sum_{g=1}^{G} S_{gT}^2 & \sum_{g=1}^{G} S_{gT}^2\end{array} \right)\left( \begin{array}{c} {-\frac{1}{N_C}} \\ {\frac{1}{N_T}+\frac{1}{N_C}}\end{array}\right) 
= \frac{\sum_g S_{gT}^2}{N_T^2} + \frac{\sum_g S_{gC}^2}{N_C^2} \ .
\end{align*}
This proves the mathematical form of the Cluster-Robust Variance Estimator is the same as the result of the Delta Method. 
\end{proof}

\begin{proof}[Proof of the Centered-form of the Delta Method variance estimator]
Let $W_g$ be the assignment indicator for cluster $g=1, \ldots, G,$ and 
$
G_T = \sum_{g=1}^G W_g
$ 
be the total number of clusters assigned to treatment. Note that if unit $i$ belongs to cluster $g$ where $W_g=1$, then by definition $W_i=1.$ Furthermore, we let
$
R_g = \sum_{i: G_{ig}=1, W_i = 1} Y_i,
$
and reformulate the treatment average as:
\begin{align}
\label{eq:use-delta-method}
\frac{\sum_{W_i = 1} Y_i}{N_T} 
&= \frac{\sum_{g:W_g=1} [\sum_{i: G_{ig}=1} Y_i]}{\sum_{g:W_g=1} [\sum_{g: G_{ig}=1} W_i] } \nonumber \\
&= \frac{\sum_{g:W_g=1} R_g }{\sum_{g:W_g=1} N_g}
\end{align}
Using Lemma~\ref{lem:dm}, the asymptotic variance of the above is equivalent to the asymptotic variance of 
\begin{align*}
\frac{\sum_{g:W_g=1} [R_g - \frac{\mathrm{E}_T(R_g)}{\mathrm{E}_T(N_g)} N_g]}{G_T\mathrm{E}_T(N_g)}
\end{align*}
where $\mathrm{E}_T$ denotes the expectation takes over treatment clusters. To be more specific, to reach this conclusion we note that the cluster-level aggregates $R_g$ and $N_g$ are i.i.d., and by first-order Taylor expansion
\begin{align*}
\frac{\sum_{g:W_g=1} R_g }{\sum_{g:W_g=1} N_g}
&= \frac{\bar R}{\bar N} \\
&\approx \frac{1}{\mathrm{E}_T N_g} (\bar R - \mathrm{E}_T R_g) - \frac{\mathrm{E}_T R_g}{\mathrm{E}_T^2 N_g} (\bar N - \mathrm{E}_T N_g) \\
&= \frac{1}{\mathrm{E}_T N_g } 
\left[
\frac{1}{G_T} \sum_{g: W_g=1} 
\left\{
R_g - \frac{\mathrm{E}_T R_g}{\mathrm{E}_T N_g} N_g
\right\}
\right]
\end{align*}
In other words, we have ``linearized'' the ratio in \eqref{eq:use-delta-method} by the Delta method. Consequently, 
\begin{equation}
\label{eq:use-delta-method-2}
\mathrm{Var}
\left(
\bar R  / \bar N
\right)
\approx
\frac{1}{G_T E_T^2 N_g}
\mathrm{Var}
\left(
R_g - \frac{\mathrm{E}_T R_g}{\mathrm{E}_T N_g} N_g
\right).
\end{equation}
To estimate the variance term in the right hand side of \eqref{eq:use-delta-method-2}, we adopt the plug-in approach. To be specific, we estimate $\mathrm{E}_T{N_g}$ by its empirical analogue 
$
\sum_{g:W_g=1}{N_g}/G_T,
$
and similarly
$
\frac{\mathrm{E}_T(R_g)}{\mathrm{E}_T(N_g)}
$
by
$
\frac{\sum_{W_i = 1} Y_i}{N_T} = \frac{ \sum_{g:W_g=1}{R_g}}{\sum_{g:W_g=1}{N_g}}.
$
It is worth noting that, for all $g$ such that $W_g = 1,$
\begin{align*}
R_g - \widehat{\frac{\mathrm{E}_T(R_g)}{\mathrm{E}_T(N_g)}} N_g 
&= \sum_{i: G_{ig}=1} 
\left\{
Y_i - \widehat{\frac{\mathrm{E}_T(R_g)}{\mathrm{E}_T(N_g)}} W_i 
\right\} \\
&= \sum_{i: G_{ig}=1} 
\left( 
Y_i - \frac{\sum_{W_i = 1} Y_i}{N_T} W_i 
\right) \\
&= \sum_{i: G_{ig}=1} 
\left( 
Y_i - \hat \alpha - \hat \tau W_i 
\right) \\
&= \sum_{i: G_{ig}=1} \hat{\epsilon_i}
\end{align*}
The second to last step holds because $W_i=1$ for all $i$ in cluster $g.$ Therefore, 
\begin{align}
\label{eq:use-delta-method-3}
\widehat{\mathrm{Var}}
\left(
R_g - \frac{\mathrm{E}_T R_g}{\mathrm{E}_T N_g} N_g
\right)
&=
\frac{1}{G_T} \sum_{g:W_g=1} 
\left\{
R_g - \widehat{\frac{\mathrm{E}_T(R_g)}{\mathrm{E}_T(N_g)}} N_g
\right\}^2 \nonumber \\
&= \frac{1}{G_T} \sum_{g:W_g=1} 
\left(
\sum_{i: G_{ig}=1} \hat{\epsilon_i}
\right)^2.
\end{align}
By \eqref{eq:use-delta-method-2} and \eqref{eq:use-delta-method-3}, 
\begin{align*}
\widehat{\mathrm{Var}}
\left(
\frac{\sum_{W_i = 1} Y_i}{N_T} 
\right)
&\approx \frac{1}{G_T^2 \hat{\mathrm{E}}_T^2 R_g} \sum_{g:W_g=1} 
\left(
\sum_{i: G_{ig}=1} \hat{\epsilon_i}
\right)^2 \\
&= \frac{\sum_{g:W_g=1} \left (\sum_{i: G_{ig}=1} \hat{\epsilon_i}\right)^2}{(\sum_{g:W_g=1}{N_g})^2} \\
&= \frac{\sum_g S_{gT}^2} {N_T^2}.
\end{align*}

Similarly, the delta method variance estimation (using plug-in estimator) of $\frac{\sum_{W_i = 0} Y_i}{N_C}$ is $\frac{\sum_g S_{gC}^2} {N_C^2}$. Therefore the delta method variance estimation of $\Delta$ is the sum of the two as in Equation~\eqref{eqn:dm}.
\end{proof}

\appendix
\section{R Code and Example Results}

\begin{lstlisting}[language=R]
library(dplyr)
set.seed(1)
numCluster = 100
cluster_size = rpois(numCluster,10)
ctr = runif(numCluster,0.5,1)
N = sum(cluster_size)
ctr_user = rep(ctr,cluster_size)

clustervec = rep(1:numCluster,cluster_size)
assignment = rep(0:1, length.out=numCluster)
assignVec = rep(assignment, cluster_size)
yvec = rbinom(N, 1, ctr_user)

simdata = tibble(Cluster = clustervec, Y = yvec, W = assignVec)



#' w is binary assignment, y is response, c is cluster id
sandwich = function(y,w,cl){
  eps = lm(y~w)$resid
  sumw = sum(w)
  A = matrix(c(length(w),sumw, sumw,sumw),2)
  Ainv = solve(A)
  .data = tibble(y=y, w=w, cl=cl,eps=eps)
  B = .data %>% group_by(cl) %>% 
    summarise(B11=sum(eps)^2,B12=sum(eps*w)*sum(eps),B21 = B12, B22 = sum(w*eps)^2) %>%
    summarise_at(vars(-cl),sum) %>% as.matrix() %>% matrix(nrow=2) 
  Ainv %*% B %*% Ainv
}

#' Delta Method
DM = function(y,w,cl, pop=TRUE){
  .data = tibble(y=y, w=w, cl=cl)
  treat = .data %>% filter(w==1)
  control = .data %>% filter(w==0)
  DMinner = function(y,cl){
    clevel = tibble(y,cl) %>% group_by(cl) %>% summarise(ys = sum(y),ws = n()) 
    #wsum is the count of users in each cluster
    muy = mean(clevel$ys)
    muw = mean(clevel$ws)
    vary = var(clevel$ys)
    varw = var(clevel$ws)
    covyw = cov(clevel$ws, clevel$ys)
    n = length(clevel$cl)
    if(pop){
      vary = vary * (n-1)/n
      varw = varw * (n-1)/n
      covyw = covyw * (n-1)/n
    }
    vary/n/muw^2-2*muy/muw^3*covyw/n+muy^2/muw^4*varw/n
  }
  DMinner(treat$y, treat$cl) + DMinner(control$y, control$cl)
}

#' Sandwich estimator implemented using the simplified form without matrix algebra
sandwich2 = function(y,w,cl){
  eps = lm(y~w)$resid
  .data = tibble(eps=eps, w=w, cl=cl)
  tmp = .data %>% group_by(cl) %>% summarise(sgt = sum(eps*w),sgc = sum(eps*(1-w)))
  sum(tmp$sgt^2)/sum(w)^2 + sum(tmp$sgc^2)/sum(1-w)^2
}


> sandwich(simdata$Y, simdata$W,simdata$Cluster)[2,2]
[1] 0.001419918
> sandwich2(simdata$Y, simdata$W,simdata$Cluster)
[1] 0.001419918
> DM(simdata$Y, simdata$W,simdata$Cluster)
[1] 0.001419918
> DM(simdata$Y, simdata$W,simdata$Cluster,pop=FALSE)
[1] 0.001448896

\end{lstlisting}

\bibliographystyle{plainnat}  
\bibliography{references}  

\begin{thebibliography}{13}
\providecommand{\natexlab}[1]{#1}
\providecommand{\url}[1]{\texttt{#1}}
\expandafter\ifx\csname urlstyle\endcsname\relax
  \providecommand{\doi}[1]{doi: #1}\else
  \providecommand{\doi}{doi: \begingroup \urlstyle{rm}\Url}\fi

\bibitem[Athey and Imbens(2017)]{athey2017econometrics}
Susan Athey and Guido~W Imbens.
\newblock The econometrics of randomized experiments.
\newblock In \emph{Handbook of Economic Field Experiments}, volume~1, pages
  73--140. 2017.

\bibitem[Dasgupta(2008)]{dasgupta2008asymptotic}
Anirban Dasgupta.
\newblock \emph{Asymptotic Theory of Statistics and Probability}.
\newblock Springer Science \& Business Media, 2008.

\bibitem[Deng et~al.(2017)Deng, Lu, and Litz]{Deng2017}
Alex Deng, Jiannan Lu, and Jonthan Litz.
\newblock Trustworthy analysis of online {A}/{B} tests: Pitfalls, challenges
  and solutions.
\newblock In \emph{Proceedings of the Tenth ACM International Conference on Web
  Search and Data Mining}, pages 641--649, 2017.

\bibitem[Deng et~al.(2018)Deng, Knoblich, and Lu]{Deng2018}
Alex Deng, Ulf Knoblich, and Jiannan Lu.
\newblock Applying the delta method in metric analytics: A practical guide with
  novel ideas.
\newblock In \emph{Proceedings of the 24th ACM SIGKDD International Conference
  on Knowledge Discovery and Data Mining}, pages 233--242, 2018.

\bibitem[Eicker(1967)]{eicker1967limit}
Friedhelm Eicker.
\newblock Limit theorems for regressions with unequal and dependent errors.
\newblock In \emph{Proceedings of the Fifth Berkeley Symposium on Mathematical
  Statistics and Probability}, volume~1, pages 59--82, 1967.

\bibitem[Huber(1967)]{huber1967behavior}
Peter~J Huber.
\newblock The behavior of maximum likelihood estimates under nonstandard
  conditions.
\newblock In \emph{Proceedings of the Fifth Berkeley Symposium on Mathematical
  Statistics and Probability}, volume~1, pages 221--233. University of
  California Press, 1967.

\bibitem[Kauermann and Carroll(2001)]{kauermann2001note}
G{\"o}ran Kauermann and Raymond~J Carroll.
\newblock A note on the efficiency of sandwich covariance matrix estimation.
\newblock \emph{Journal of the American Statistical Association}, 96:\penalty0
  1387--1396, 2001.

\bibitem[Klar and Donner(2001)]{klar2001current}
Neil Klar and Allan Donner.
\newblock Current and future challenges in the design and analysis of cluster
  randomization trials.
\newblock \emph{Statistics in Medicine}, 20:\penalty0 3729--3740, 2001.

\bibitem[Kohavi et~al.(2010)Kohavi, Longbotham, and Walker]{kohavi2010online}
Ron Kohavi, Roger Longbotham, and Toby Walker.
\newblock Online experiments: Practical lessons.
\newblock \emph{Computer}, 43\penalty0 (9):\penalty0 82--85, 2010.

\bibitem[Liang and Zeger(1986)]{liang1986longitudinal}
Kung-Yee Liang and Scott~L Zeger.
\newblock Longitudinal data analysis using generalized linear models.
\newblock \emph{Biometrika}, 73\penalty0 (1):\penalty0 13--22, 1986.

\bibitem[Van~der Vaart(2000)]{van2000asymptotic}
Aad~W Van~der Vaart.
\newblock \emph{Asymptotic Statistics}.
\newblock Cambridge University Press, 3rd edition, 2000.

\bibitem[White(1980)]{white1980heteroskedasticity}
Halbert White.
\newblock A heteroskedasticity-consistent covariance matrix estimator and a
  direct test for heteroskedasticity.
\newblock \emph{Econometrica}, 48\penalty0 (4):\penalty0 817--838, 1980.

\bibitem[White(1982)]{white1982maximum}
Halbert White.
\newblock Maximum likelihood estimation of misspecified models.
\newblock \emph{Econometrica}, 50\penalty0 (1):\penalty0 1--25, 1982.

\end{thebibliography}


\end{document}